\def\BibTeX{{\rm B\kern-.05em{\sc i\kern-.025em b}\kern-.08em
    T\kern-.1667em\lower.7ex\hbox{E}\kern-.125emX}}
\newcolumntype{C}{>{\centering\arraybackslash}X} 
\newcolumntype{b}{>{\hsize=2.3\hsize}X}
\theoremstyle{plain}
\newtheorem{theorem}{Theorem}
\newtheorem*{theorem*}{Theorem$\,\diamond$\!\!}
\newtheorem{lemma}{Lemma}
\newtheorem{corollary}{Corollary}
\newtheorem{proposition}{Proposition}
\theoremstyle{definition}
\newtheorem{definition}{Definition}
\theoremstyle{remark}
\newtheorem{remark}{Remark}
\newtheorem{example}{Example}
\newcommand{\A}{\mathcal{A}}
\newcommand{\X}{\mathcal{X}}
\newcommand{\Y}{\mathcal{Y}}
\newcommand{\Z}{\mathcal{Z}}
\newcommand{\M}{\mathcal{M}}
\newcommand{\Hy}{\mathcal{H}}
\newcommand{\E}{\mathbb{E}}
\newcommand{\Pm}{\mathcal{P}}
\newcounter{labelcnt}
\renewcommand{\thelabelcnt}{(\alph{labelcnt})}
\newcommand{\setlabel}[1]{%
  \refstepcounter{labelcnt}\ltx@label{lbl:#1}%
  {\text{\upshape\thelabelcnt}}%
}
\providecommand{\MG}[1]{{\color{red} MG: #1}}
\begin{document}

\title{From Generalisation Error to Transportation-cost Inequalities and Back}

\author{
\IEEEauthorblockN{Amedeo Roberto Esposito, Michael Gastpar}
\IEEEauthorblockA{\textit{School of Computer and Communication Sciences} \\
       EPFL, Lausanne, Switzerland\\
\{amedeo.esposito, michael.gastpar\}@epfl.ch}
}

\maketitle
\begin{abstract} In this work, we connect the problem of bounding the expected generalisation error with transportation-cost inequalities. 
 Exposing the underlying pattern behind both approaches we are able to generalise them
 and go beyond Kullback-Leibler Divergences/Mutual Information and sub-Gaussian measures. In particular, we are able to provide a result showing the equivalence between two families of inequalities: one involving functionals and one involving measures. This result generalises the one proposed by Bobkov and G\"otze that connects transportation-cost inequalities with concentration of measure. Moreover, it allows us to recover all standard generalisation error bounds involving mutual information and to introduce new, more general bounds, that involve
arbitrary divergence measures.
\end{abstract}

\begin{IEEEkeywords}
Wasserstein Distance,
Kullback-Leibler Divergence, Information Measures, Duality, Young's Inequality, Transportation-Cost Inequalities, Generalisation Error 
\end{IEEEkeywords}
\section{Introduction}
A recent and interesting line of research has explored the problem of bounding the generalization error of learning algorithms via information measures \cite{infoThGenAn,learningMI, wassersteinCalmon,generrWassDist,genErrMISGLD,tighteningMI,fullVersionGeneralization, ismi, conditionalMI}. The starting observation is that one can interpret a learning algorithm as a (potentially) randomised mapping that takes as input a data-set and provides as output a hypothesis (e.g., a classifier). The purpose is to retrieve a classifier that has good performance both on the training set and on an \textit{independent} test set. Intuitively, if the outcome of the learning algorithm \textbf{depends} too much on its input then its performance on new data-points will be poor. In this case, the algorithm is said to \textit{overfit} to the training set. One way of measuring said dependence is through information measures. To this day, virtually every information measure has been connected to the problem: Mutual Information \cite{infoThGenAn, learningMI, ismi, conditionalMI}, Total Variation \cite{wassersteinCalmon},  R\'enyi's $\alpha$-Divergences, Sibson's $\alpha-$Mutual Information, $f-$divergences and $f-$Mutual Information \cite{fullVersionGeneralization}, etc.
The connection has been drawn with respect to expected generalisation error \cite{infoThGenAn, explBiasMI, wassersteinCalmon, generrWassDist} and with respect to the probability of having a large generalisation error \cite{fullVersionGeneralization}.\\
We noticed a similarity between the bound connecting expected generalisation error to mutual information and transportation-cost inequalities. Transportation-cost inequalities connect Wasserstein distances and Kullback-Leibler divergences to the concentration of measure phenomenon.
In this work we will connect the dots and expose the pattern connecting these objects. Moreover, unearthing the underlying mechanism we will generalise both type of results (generalisation-error bounds and transportation-cost inequalities) and go beyond Kullback-Leibler Divergences and sub-Gaussian tails. 

\section{Background and definitions}
Following the De Finetti's notation, instead of denoting the expectation of a function $f$ with respect to a probability measure $\mu$ with $\mathbb{E}_\mu[f]$ we will denote it with $\mu(f)$. Spaces will be denoted with calligraphic letters $\X,\Y$, random variables with capital letters $X,Y$ and measures with Greek lower-case letters $\mu,\nu$.
Given a function $f:\X\to \Y$ we will denote with $f^\star$ its Legendre-Fenchel dual, defined as follows.
\begin{equation}
    f^\star(x^\star)= \sup_{x\in \X} \langle x, x^\star \rangle - f(x), \label{lfDuality}
\end{equation}
where $\langle x, x^\star \rangle$ denotes the natural pairing between a space $\X$ and its topological dual $\X^\star$, \textit{i.e.},  $\langle x, x^\star \rangle= x^\star(x).$ A different but related notion of duality is Young's duality. Given a function $f:\mathbb{R}\to\mathbb{R}$ we denote, through a slight abuse of notation, with $f^\star$ its Young's complementary function, defined as follows:
\begin{equation}
    f^\star(x^\star)= \sup_{x>0} \langle x,x^\star \rangle - f(x),\,\,\, x^\star>0. \label{youngsDuality}
\end{equation}
Notice that the supremum in \eqref{youngsDuality} is over a different set with respect to \eqref{lfDuality}. We will always specify which notion of duality we are using throughout the paper.
Given two measures $\mu,\nu$ and a convex functional $\varphi$ we denote with $D_\varphi(\nu\|\mu)=\mu\left(\varphi\left(\frac{d\nu}{d\mu}\right)\right)$ the $\varphi$-divergence, where $\frac{d\nu}{d\mu}$ is the Radon-Nikodym derivative. With $\varphi(x)=x\log x$ one recovers the KL-divergence $D(\nu\|\mu)$, with $\varphi(x)=|x-1|$ one recovers the Total Variation distance $TV(
\nu,
\mu)$ and with $\varphi(x)=\frac{|x|^\alpha}{\alpha}, \alpha>0$ one recovers the Hellinger integral $H_\alpha(\nu\|\mu)$.
\subsection{Optimal Transport Theory}\label{tcI} Denote with $\mathcal{M}(\X)$ the set of all signed measures over $\X$ and with $\mathcal{P}(\X)$ the set of all probability measures over $\X$. Let $\mu,\nu \in \mathcal{P}(\X)$, consider the set $\Pi(\mu,\nu)$ of all the joint probability measures $\pi \in \mathcal{P}(\X\times\X)$ with marginals equal to $\mu$ and $\nu$, \textit{i.e.}, such that $\pi(\cdot\times\X)= \mu(\cdot)$ and $\pi(\X\times\cdot)=\nu(\cdot)$. The problem advanced by Kantorovich was the following: given $\mu$ and $\nu$ and a Borel function $d:\X\times\X\to [0,+\infty]$, can we find a joint measure $\pi$ that minimises $
    \pi(d(X,Y))?$ $\pi$ represents a \textbf{transport plan} between $\mu,\nu$. 
Under mild assumptions on $d$, optimal transport plans are guaranteed to exists for extremely general spaces $\X$ \cite[Theorem 4.1]{VillaniOToldandnew}. 
If $d$ itself is a metric over $\X$, then $\inf_\pi\pi(d(X,Y))$ represents a distance over the space of probability measures. 
 In particular, given a metric $d$, let us denote with $\Pm_p(\X)$ the set of probability measures $\mu$ on $\X$ such that $\mu(d(X,x_0)^p)^{1/p}<+\infty$ for some $x_0\in \X$.
 \begin{definition}[\!\!\!{{\cite[Def. 6.1]{VillaniOToldandnew}}}]
 Let $(\X,d)$ be a Polish space and $p\in[1,+\infty).$ Let $\mu,\nu \in \Pm_p(\X)$, the $p-$Wasserstein distance between $\mu$ and $\nu$ is defined as $
     W_p(\mu,\nu) = \inf_{\pi\in\Pi(\mu,\nu)} (\pi(d(X,Y)^p))^{1/p}.$
 \end{definition}
 Wasserstein distances satisfy interesting properties \cite[ Lemma 3.4.1]{concentrationMeasureII}. Moreover, when connected to KL-divergences, in what are known in the literature as ``Transportation-cost Inequalities'', they have interesting implications in the concentration of measure phenomenon. 
\begin{definition}[Transportation-Cost Inequality]\label{tpcIneq}
Let $(\X,d)$ be a Polish space and $\mu$ a probability measure on $\X$, we say that $\mu$ satisfies an $L^p$-transportation-cost inequality with constant $c$ (or $T_p(c)$ in short) if for every $\nu\ll\mu$
\begin{equation}
    W_p(\mu,\nu)\leq \sqrt{2cD(\nu\|\mu)}.\label{classicalTpc}
\end{equation}
\end{definition}

When $p=1$, for instance, these inequalities are equivalent to concentration in the following sense:
\begin{theorem*}[\!\!\!{{\cite[Thm 3.1]{bobkov}}}]
Let $\mu\in\Pm_1(\X)$ be a Borel probability measure. There exists a $c$ such that for every $\lambda\in\mathbb{R}$
\begin{equation}
    \log\mu(\exp(\lambda f)) \leq \left(\frac{c \lambda^2}{2}\right) \label{tpcHypothesis}
\end{equation}
for every $1-$Lipschitz function $f$, if and only if $\mu$ satisfies a $T_1(c)$ inequality, \textit{i.e.}, for every $\nu\ll\mu$
\begin{equation}
    W_1(\mu,\nu)\leq \sqrt{2cD(\nu\|\mu)} \label{tpcThesis}.
\end{equation}
\end{theorem*}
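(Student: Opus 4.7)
The plan is to treat the theorem as a manifestation of three classical duality results working in tandem: the Kantorovich--Rubinstein duality for $W_1$, the Donsker--Varadhan variational representation of the KL-divergence, and the self-duality of $\lambda \mapsto c\lambda^2/2$ under Young's/Legendre duality. Concretely, I will use
\[
W_1(\mu,\nu) = \sup_{\|f\|_{\mathrm{Lip}}\leq 1}\bigl(\nu(f)-\mu(f)\bigr), \qquad D(\nu\|\mu) = \sup_{g}\bigl(\nu(g) - \log\mu(e^{g})\bigr),
\]
the supremum in the second identity being over all bounded measurable $g$ (and then extended to 1-Lipschitz $g$ by truncation). The right-hand side $\sqrt{2cD(\nu\|\mu)}$ is exactly what the Young dual of $\lambda^2/2$ produces when one inverts the MGF bound.

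For the implication \eqref{tpcHypothesis}$\Rightarrow$\eqref{tpcThesis}, I would fix an arbitrary $1$-Lipschitz $f$ and apply the Donsker--Varadhan inequality with the test function $\lambda(f-\mu(f))$: this yields $\lambda\bigl(\nu(f)-\mu(f)\bigr) \leq \log\mu(e^{\lambda(f-\mu(f))}) + D(\nu\|\mu)$. Since $f-\mu(f)$ is still 1-Lipschitz, hypothesis \eqref{tpcHypothesis} gives $\log\mu(e^{\lambda(f-\mu(f))}) \leq c\lambda^2/2$, hence $\lambda\bigl(\nu(f)-\mu(f)\bigr) \leq c\lambda^2/2 + D(\nu\|\mu)$. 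Optimising over $\lambda\in\mathbb{R}$ (the Legendre transform of $c\lambda^2/2$ is $x^2/(2c)$) produces $\nu(f)-\mu(f)\leq\sqrt{2cD(\nu\|\mu)}$. Taking the supremum over all $1$-Lipschitz $f$ and invoking Kantorovich--Rubinstein yields \eqref{tpcThesis}.

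For the converse, I would start from \eqref{tpcThesis} and, for any $1$-Lipschitz $f$, observe $\nu(f)-\mu(f)\leq W_1(\mu,\nu)\leq\sqrt{2cD(\nu\|\mu)}$. Multiplying by $\lambda$ (and flipping the sign of $f$ if needed so both signs of $\lambda$ are covered) and applying Young's inequality $ab\leq a^2/2+b^2/2$ with $a=|\lambda|\sqrt{c}$, $b=\sqrt{2D(\nu\|\mu)}$ gives
\[
\lambda\nu\bigl(f-\mu(f)\bigr) - D(\nu\|\mu) \leq \frac{c\lambda^2}{2}.
\]
Taking the supremum of the left-hand side over all $\nu\ll\mu$ and invoking the Donsker--Varadhan representation in the opposite direction identifies the supremum as $\log\mu(e^{\lambda(f-\mu(f))})$, which is \eqref{tpcHypothesis} (modulo the harmless centering $f\mapsto f-\mu(f)$).

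The main conceptual obstacle, and the point I expect to have to handle carefully, is being clean about which variational identity is being used as a definition and which as a consequence, since the two directions are mirror images obtained by applying the same Young inequality either forwards or backwards. A secondary technical point is the attainability of the Donsker--Varadhan supremum for unbounded 1-Lipschitz $g$ when $\nu\in\Pm_1(\X)$; this is standard but must be justified by approximation so that \eqref{tpcHypothesis} can be used with $g=\lambda(f-\mu(f))$ even when $f$ is only 1-Lipschitz rather than bounded.
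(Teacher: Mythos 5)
Your proposal is correct and follows essentially the same route as the paper: the paper recovers this theorem by instantiating its general duality results (Theorem \ref{generalBoundInverseConjugate} for the ``if'' direction and Theorem \ref{onlyIfGeneralBoundInverseConjugate} for the ``only if'' direction) with $\psi(\lambda f)=\log\mu(\exp(\lambda f))$, $\psi^\star(\nu)=D(\nu\|\mu)$ and $\phi(\lambda)=c\lambda^2/2$, which is exactly your combination of Donsker--Varadhan, optimisation over $\lambda$ via the Legendre/Young dual of $c\lambda^2/2$, and Kantorovich--Rubinstein with the centering $f\mapsto f-\mu(f)$ and the $\pm f$ symmetrisation (carried out explicitly in the paper's proof of Theorem \ref{TpcHellinger}).
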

\begin{example}
Let $\X$ be a discrete space and $d(x,y)=\mathds{1}_{x\neq y}$. We have that $W_1(\mu,\nu) = TV(\mu,\nu)$, \textit{i.e.}, the Total Variation distance between $\mu,\nu$ \cite[Prop. 3.4.1]{concentrationMeasureII}. In this case the transportation cost inequality is well-known under the name of Pinsker's inequality and it holds for every $\mu\in\Pm_1(\X)$ with $c=1/4$. In general, such inequalities are highly non-trivial and hold for specific distributions (e.g., Gaussian, etc.).
\end{example}
\subsection{Learning Theory} 
For reasons of space we skip a thorough description of the learning framework. The setting we consider is the one described in \cite[Section II.C]{fullVersionGeneralization}. In particular, in this work, we focus on the expected generalisation error which can be defined as follows:
\begin{definition}\label{generr} Let $\mathcal{P}_Z$ be a probability measure over $\mathcal{Z}$. Let $\ell:\mathcal{H}\times\mathcal{Z}\to\mathbb{R}$ be a loss function. The error (or risk) of a prediction rule $h$ with respect to $\mathcal{P}$ is defined as $L _\mathcal{P}(h)=\Pm_Z(\ell(h,Z))$
	while, given a sample $S=(z_1,\ldots,z_n)$, 
	the empirical error
	of $h$ with respect to $S$ is defined as $L_{S}(h) = \frac1n \sum_{i=1}^n \ell(h, z_i).$
	Moreover, given a learning algorithm $\mathcal{A}:\mathcal{Z}^n\to\mathcal{H}$, its generalization error with respect to $S$ is defined as follows:
	\begin{equation}\label{eq:generr}\text{gen-err}_\mathcal{P}(\mathcal{A},S)=\Pm_{HS}(L_{S}(\mathcal{A}(S))-L_{\mathcal{P}}(\mathcal{A}(S))).\end{equation}
\end{definition}
\section{Main Result}

Let us start with the main result and then elaborate on it to show how it is connected to both generalisation error and transportation-cost inequalities.
\begin{theorem}\label{generalBoundInverseConjugate}
Let $f\in C_c(\X)$ and $\phi$ be a strongly convex function such that its Legendre-Fenchel dual $\phi^\star$ admits a generalised inverse ${\phi^{\star}}^{-1}(t)=\inf\{s: \phi(s)\geq t\}$. Let $\psi^\star:\mathcal{M}(\X)\to\mathbb{R}$ be the Legendre-Fenchel dual of $\psi:C_c(\X)\to\mathbb{R}$. If \begin{equation} \psi(\lambda f)\leq \phi(\lambda)\text{ for every }\lambda>0\label{eq:generalBoundInverseConjugateAssumption} \end{equation} then, for every measure $\nu$ such that $\nu(f)<+\infty$ and ${\phi'}^{-1}\left({\phi^\star}^{-1}(\psi^\star(\nu))\right)>0$,
\begin{equation}
    \nu(f) \leq {\phi^\star}^{-1}\left(\psi^\star(\nu)\right).
    \label{eq:generalBoundInverseConjugateThesis}
\end{equation}
\end{theorem}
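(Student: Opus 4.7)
The plan is to apply the variational definition of $\psi^\star$ at a carefully chosen scalar $\lambda$, rather than to take a supremum and then invert. First, by the Legendre-Fenchel definition \eqref{lfDuality}, the choice $g = \lambda f \in C_c(\X)$ in the defining supremum yields, for every $\lambda > 0$,
\begin{equation*}
\psi^\star(\nu) \;\geq\; \nu(\lambda f) - \psi(\lambda f) \;=\; \lambda\,\nu(f) - \psi(\lambda f).
\end{equation*}
Combining this with the hypothesis \eqref{eq:generalBoundInverseConjugateAssumption} gives the scalar inequality $\lambda\,\nu(f) - \phi(\lambda) \leq \psi^\star(\nu)$ for every $\lambda > 0$.

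Next I would make a judicious choice of $\lambda$. Setting $T := {\phi^\star}^{-1}(\psi^\star(\nu))$, the generalised-inverse relation gives $\phi^\star(T) = \psi^\star(\nu)$. Since $\phi$ is strongly convex and differentiable, the supremum in the Legendre-Fenchel definition $\phi^\star(T) = \sup_{\lambda}(\lambda T - \phi(\lambda))$ is attained at $\lambda^\star = {\phi'}^{-1}(T) = {\phi'}^{-1}\!\bigl({\phi^\star}^{-1}(\psi^\star(\nu))\bigr)$, and the theorem's hypothesis is exactly that this optimiser is strictly positive. Plugging $\lambda = \lambda^\star > 0$ into the scalar inequality from the previous paragraph and using the identity $\psi^\star(\nu) = \phi^\star(T) = \lambda^\star T - \phi(\lambda^\star)$ yields
\begin{equation*}
\lambda^\star\,\nu(f) - \phi(\lambda^\star) \;\leq\; \lambda^\star T - \phi(\lambda^\star);
\end{equation*}
cancelling $\phi(\lambda^\star)$ and dividing by $\lambda^\star > 0$ gives the thesis \eqref{eq:generalBoundInverseConjugateThesis}.

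The main obstacle, and the reason for the seemingly technical condition ${\phi'}^{-1}({\phi^\star}^{-1}(\psi^\star(\nu))) > 0$, is precisely the need to guarantee that the unconstrained Legendre-Fenchel optimiser at $T$ falls into the regime $\lambda > 0$ in which the scalar inequality is valid. That restriction $\lambda > 0$ is inherited from the hypothesis \eqref{eq:generalBoundInverseConjugateAssumption}, which is really a Young-duality statement in the sense of \eqref{youngsDuality}. The hypothesis of the theorem is exactly the bridge between the two notions of duality recalled in \eqref{lfDuality} and \eqref{youngsDuality}, and it is what permits the clean appearance of the inverse ${\phi^\star}^{-1}$ in the final bound.
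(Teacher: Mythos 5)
Your proposal is correct and follows essentially the same route as the paper's proof: both start from the Legendre--Fenchel lower bound $\psi^\star(\nu)\geq \lambda\nu(f)-\psi(\lambda f)$, invoke the hypothesis to replace $\psi(\lambda f)$ by $\phi(\lambda)$, and then substitute the same optimiser $\lambda^\star={\phi'}^{-1}\bigl({\phi^\star}^{-1}(\psi^\star(\nu))\bigr)$ via the identity $\phi^\star(t)=t{\phi'}^{-1}(t)-\phi\bigl({\phi'}^{-1}(t)\bigr)$. The only cosmetic difference is that you assert $\phi^\star(T)=\psi^\star(\nu)$ exactly, whereas with the generalised inverse one only gets $\phi^\star(T)\geq\psi^\star(\nu)$ (the inequality the paper actually uses), but that is the direction your argument needs, so nothing breaks.
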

\begin{proof}
By definition of Legendre-Fenchel transform we know that for a given function $f$ and any given measure $\nu$:
\begin{align}
    \psi^\star(\nu) &= \sup_{f \in C_c(\X)} \langle  f,\xi \rangle - \psi(f)  \label{LFdual1}
    \\&\geq \lambda \langle f, \nu \rangle - \psi(\lambda f) = \lambda \nu(f)-\psi(\lambda f).
\end{align}
Hence, we can say that, given $f\in C_c(\X)$, $\nu\in \M(\X)$ and $\lambda>0$
\begin{align}
     \nu(f) &\leq \frac{\psi(\lambda f) + \psi^\star(\nu)}{\lambda} \leq \frac{\phi(\lambda) + \psi^\star(\nu)}{\lambda}\label{toMinimiseExpBound},
\end{align}
where in \eqref{toMinimiseExpBound} we used our assumption on $\psi$. 
Denoting with $c=\psi^\star(\nu)$ and choosing $\lambda = {\phi'}^{-1}\left({\phi^\star}^{-1}(c)\right)$ gives us that
\begin{equation}
    \nu(f)\leq {\phi^\star}^{-1}(c) = {\phi^\star}^{-1}(\psi^\star(\nu)). \label{eq:inverseConjTarget}
\end{equation}
Indeed, let us denote, for simplicity, ${\phi^\star}^{-1}(c)=t$, then replacing $\lambda$ with ${\phi'}^{-1}\left({\phi^\star}^{-1}(c)\right)$ in \eqref{toMinimiseExpBound} one has
\begin{align}
    \frac{c+\phi({\phi'}^{-1}\left(t\right))}{{\phi'}^{-1}\left(t\right)} &= \frac{c+t\phi'^{-1}(t)-\phi^\star(t)}{{\phi'}^{-1}\left(t\right)} \label{crazyStep} \\
    &= t +\frac{c-\phi^\star(t)}{{\phi'}^{-1}\left(t\right)} \\&= {\phi^\star}^{-1}(c)  +\frac{c-\phi^\star({\phi^\star}^{-1}(c))}{{\phi'}^{-1}\left(t\right)} \\ &\leq  {\phi^\star}^{-1}(\psi^\star(\nu)) \label{lastStepCrazyProof},
\end{align}
where \eqref{crazyStep} follows from \cite[Eq. (1.11)]{minimizationMeasures}.
\end{proof}

\begin{remark}
In most settings of interest the assumption that ${\phi'}^{-1}\left({\phi^\star}^{-1}(\psi^\star(\nu))\right)>0$ is easily satisfied. If one considers $\psi^\star(\nu)$ to be a $\varphi$-Divergence, \textit{i.e.,} $\psi^\star(\nu)=D_{\varphi}(\nu\|\mu)$, with $\mu$ a probability measure fixed before-hand,
then $\psi^\star(\nu)\geq 0$ for every $\nu\in\Pm(\X)$. Typical functions $\phi$ will be of the form $\phi(x)=x^{\alpha}/\alpha$ with $\alpha>1$. This implies that ${\phi'}^{-1}\left({\phi^\star}^{-1}(\psi^\star(\nu))\right)=(\beta \psi^\star(\nu))^{\frac1\alpha}$ with $\beta= \alpha/(\alpha-1)>0$, which is clearly positive for $\psi^\star(\nu)>0$.
\end{remark}
A variety of assumptions in Theorem \ref{generalBoundInverseConjugate} can be altered. It is possible, for instance, 
to choose different notions of duality regarding both $\psi$ and $\phi$. 
If instead of Legendre-Fenchal duality of $\phi$ one uses the notion of Young's duality/Young's complementary functions then it is actually possible to find the infimum over $\lambda>0$ in \eqref{toMinimiseExpBound} using \cite[Lemma 2.4]{BLM2013Concentration} (modulo some extra assumptions on $\phi$). 
The result remains unchanged with a slightly different set of assumptions and the knowledge that such bound is the tightest with respect to $\lambda$ in \eqref{toMinimiseExpBound}.
Other choices might be suitable as well. 
For instance, selecting $\psi_{\mu}^\star(f)=\mu(\exp(f))-1$ one recovers $\psi^\star_{\mu}(\nu)$ to be the KL-Divergence between any positive measure $\nu$ (not necessarily a probability measure) and a fixed measure $\mu$. Alternatively, considering $\psi_\mu(\nu)=\log(\mu(\exp(f)))$ then $\psi^\star_\mu$ would be the usual KL-Divergence, defined only on probability measures, (e.g., with $D(\nu\|\mu)=+\infty$ if $\nu\not\in\Pm(\X))$. In general, whenever we restrict the domain of the divergence and,  we obtain a potentially tighter variational representation of $\psi^\star$ (in the considered space) which might, however, be harder to compute. In the case of KL the restriction to probability measures is well-known in the literature as the ``Donsker-Varadhan'' representation of KL. For more details on the matter we refer the reader to~\cite{varReprProbMeasures}. 

Let us now proceed with examples with the purpose of better understanding both the assumptions and the implications of Theorem \ref{generalBoundInverseConjugate}. 
\begin{corollary}\label{boundDifferenceKL}
Let $X$ be a random variable over the probability space $(\X,\mathcal{F},\mu)$ and assume $X$ to be a zero-mean $\sigma^2$-sub-Gaussian random variable \footnote{Given a zero-mean random variable $X$ we say that it is  $\sigma^2$-sub-Gaussian if the following holds true for every $\lambda\in\mathbb{R}$ : $
\log(\mu(e^{\lambda X})) \leq \frac{\lambda^2\sigma^2}{2}
$.} with respect to $\mu$. We have that for every measure $\nu$ such that $\nu\ll\mu$ and such that $\nu(X)<+\infty$
\begin{equation}
    \nu(X) \leq \sqrt{2\sigma^2 D(\nu\|\mu)}. \label{boundDiffKL}
\end{equation}
\end{corollary}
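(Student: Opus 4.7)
The plan is to apply Theorem \ref{generalBoundInverseConjugate} with the appropriate choices of $f,\phi,\psi$ so that the hypothesis of the theorem reduces exactly to the sub-Gaussian condition and the conclusion reduces to \eqref{boundDiffKL}. Concretely, I would take $f=X$ (viewed as the identifying function), $\phi(\lambda)=\sigma^2\lambda^2/2$, and $\psi(g)=\log\mu(\exp(g))$ (the cumulant generating functional under $\mu$). With these choices the sub-Gaussian assumption $\log\mu(e^{\lambda X})\leq \lambda^2\sigma^2/2$ is precisely $\psi(\lambda f)\leq\phi(\lambda)$ for every $\lambda>0$, which is exactly the hypothesis \eqref{eq:generalBoundInverseConjugateAssumption} of the theorem.

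Next, I would compute the three ingredients the theorem asks for: the Legendre--Fenchel dual $\phi^\star$, its generalised inverse, and the dual $\psi^\star$. A quick calculation gives $\phi^\star(t)=t^2/(2\sigma^2)$, whose generalised inverse on the non-negative axis is ${\phi^\star}^{-1}(c)=\sqrt{2\sigma^2 c}$. For $\psi$, the variational representation of the KL-divergence restricted to probability measures (the Donsker--Varadhan formula, mentioned in the paragraph preceding the corollary) yields $\psi^\star(\nu)=D(\nu\|\mu)$ whenever $\nu\in\Pm(\X)$ with $\nu\ll\mu$, and $+\infty$ otherwise. Substituting, the conclusion \eqref{eq:generalBoundInverseConjugateThesis} reads
\begin{equation*}
    \nu(X)\;\leq\;{\phi^\star}^{-1}\!\bigl(\psi^\star(\nu)\bigr)\;=\;\sqrt{2\sigma^2\,D(\nu\|\mu)},
\end{equation*}
which is exactly \eqref{boundDiffKL}.

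Finally, I would verify the positivity side-condition ${\phi'}^{-1}({\phi^\star}^{-1}(\psi^\star(\nu)))>0$: since $\phi'(\lambda)=\sigma^2\lambda$, one has ${\phi'}^{-1}(t)=t/\sigma^2$, so the quantity equals $\sqrt{2D(\nu\|\mu)}/\sigma$, strictly positive as soon as $D(\nu\|\mu)>0$. The degenerate case $D(\nu\|\mu)=0$ forces $\nu=\mu$, in which case $\nu(X)=\mu(X)=0$ by the zero-mean assumption and the inequality \eqref{boundDiffKL} is trivial.

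The only genuine subtlety, and what I would regard as the main obstacle to a fully rigorous write-up, is that Theorem \ref{generalBoundInverseConjugate} is stated for $f\in C_c(\X)$ whereas the random variable $X$ need not be continuous with compact support. This is precisely the kind of extension discussed in the paragraph following Theorem \ref{generalBoundInverseConjugate}: by changing the functional $\psi$ (using the Donsker--Varadhan representation directly, which is valid for every measurable $f$ such that $\mu(e^{\lambda f})<\infty$ for $\lambda$ in the relevant range) one obtains the same duality identity $\psi^\star(\nu)=D(\nu\|\mu)$ on $\Pm(\X)$, and the chain of inequalities in the proof of the theorem goes through verbatim under the integrability assumption $\nu(X)<+\infty$ that is built into the statement of the corollary.
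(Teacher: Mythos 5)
Your proposal is correct and follows essentially the same route as the paper's own proof: both invoke Theorem \ref{generalBoundInverseConjugate} with $\psi(f)=\log\mu(\exp(f))$ (so that $\psi^\star(\nu)=D(\nu\|\mu)$ via Donsker--Varadhan) and $\phi(\lambda)=\sigma^2\lambda^2/2$, and compute ${\phi^\star}^{-1}(t)=\sqrt{2\sigma^2 t}$. The only difference is that you additionally verify the positivity side-condition, dispose of the degenerate case $D(\nu\|\mu)=0$, and flag the $C_c(\X)$ versus general measurable $f$ issue --- details the paper's proof leaves implicit.
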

\begin{proof} 
The proof follows from Theorem \ref{generalBoundInverseConjugate}, selecting $\psi(f)= \log(\mu(\exp(f)))$ and thus $\psi^\star(\nu)=D(\nu\|\mu)$ with $\nu\in\Pm(\X)$. 
The assumption that $X$ is sub-Gaussian under $\mu$ implies that $
    \psi(\lambda X) \leq \frac{\lambda^2\sigma^2}{2} = \phi(\lambda)$ for every $\lambda\in\mathbb{R}$, where $\phi$ is a strongly convex function.
To complete the argument, we observe that $
    \phi^\star(\lambda^\star)=\frac{{\lambda^\star}^2}{2\sigma^2} \implies   {\phi^\star}^{-1}(t) = \sqrt{2\sigma^2t},$
which establishes the claimed bound.
\end{proof}
\section{Generalisaton Error Bounds}
General bounds can be provided using Theorem \ref{generalBoundInverseConjugate} and the well-known Corollary \ref{boundDifferenceKL}. However, to offer something even more meaningful to the reader, let us set-up a very specific framework and restrict our assumptions further. Given a convex functional $\varphi$, we will consider the following family of functionals over measures $\psi^\star_\mu(\cdot)=D_{\varphi}(\cdot\|\mu)$, with $\mu$ a measure fixed beforehand. Consequently $\psi_\mu$ denotes the Legendre-Fenchel dual of $D_\varphi(\cdot\|\mu)$~\cite[Lemma 4.5.8]{dembo2009large}. We will assume, as above, that given a function $f\in C_c(\X)$, $\psi_\mu(\lambda f) \leq \phi(\lambda)$ for some convex function $
\phi$ and every $\lambda>0$.
 Let us set ourselves in a classical learning setting. Let $S=(Z_1,\ldots,Z_n)$ and $H=\A(S)$ be two random variables respectively over the spaces $(\Z^n,\mathcal{F}_{\Z^n}, \Pm_S\!\!=\!\!\Pm_Z^{\otimes n})$, $(\Hy,\mathcal{F}_\Hy, \Pm_H)$. Let $\Pm_S\Pm_H$ denote the joint measure induced by the product of the marginals of $S$ and $H$ over $(\Z^n\times\Hy, \mathcal{F})$. 
Let $S$ be the input of a learning algorithm $\A$ and $H=\A(S)$ the corresponding output, let $\Pm_{SH}$ be the joint measure induced by $\A$. Assume, moreover, that $\Pm_{SH}\ll \Pm_S\Pm_H$.

In order to match the framework that is typically utilised in the Learning Theory literature, we will work with $\psi^\star_{\Pm_S\Pm_H}(\cdot)=D_\varphi (\cdot\|\Pm_S\Pm_H)$ and, following the structure of Theorem \ref{generalBoundInverseConjugate}, one has to assume something about $\psi_{\Pm_S\Pm_H}$ in order to provide a bound that involves $\psi^\star_{\Pm_S\Pm_H}$. However, our assumptions (again, matching the literature) will involve $\psi_{\Pm_S}$ or $\psi_{\Pm_Z}$.  The reason why we can do this is the following.  Given the choice of $\psi^\star_\mu = D_\varphi(\cdot\|\mu)$, we typically know the shape that $\psi_\mu$ will have. If $\psi^\star_\mu$ is the KL divergence $\psi_\mu(f)=\log\mu(\exp(f))$ while if $\psi^\star_\mu$ is a $\varphi-$Divergence $\psi_\mu(f)=\mu(\varphi^\star(f))$ ~\cite{minimizationMeasures}. This naturally implies that if we consider product measures, one has the following characterisation for the dual $\psi_{\mu\times\xi}(f)=\xi(\mu(\varphi^\star(f)))=\xi(\psi_\mu(f))$ for $\varphi$-Divergences and $\exp(\psi_{\mu\times\xi}(f))=\xi(\mu(\exp(f)))=\xi(\exp(\psi_{\mu}(f)))$ for KL.

Consequently, since we will consider the product measure $\Pm_S\Pm_H$ we have that, given the structure of the dual, an upper-bound on $\psi_{\Pm_S}$ for every $h\in\mathcal{H}$ naturally implies an upper-bound on $\psi{\Pm_S\Pm_H}$. Another important consideration is that, an assumption of the form $\psi_{\Pm_Z}(\lambda(\ell(h,Z)-\Pm_Z(\ell(h,Z))))\leq \phi(\lambda)$ for every $h$ typically implies an assumption of the form $\psi_{\Pm_S}(\frac{\lambda}{n}(\sum_i (\ell(h,Z_i)-\Pm_S(\ell(h,Z_i)))))\leq \phi(\lambda)/n$, with $S$ a sequence of $n$ iid samples distributed according to $\Pm_Z^{\otimes n}$, something that we will informally define as the ``n-sum property'' of $\psi_{\Pm_Z},\psi_{\Pm_S}$ and $\phi$.
Under this framework, we can state the following result.
\begin{corollary}\label{generalGenErrBound}
 Let $\psi^\star_{\Pm_S\Pm_H}(\cdot)=D_\varphi(\cdot\|\Pm_S\Pm_H)$ for some convex functional $\varphi$. Assume that $\psi_{\Pm_S\Pm_H}(\lambda(L_S(H)-\Pm_H(L_\Pm(H))))\leq \phi(\lambda)/n$ for every $h\in \mathcal{H} $ and $n,\lambda>0$. One has that
\begin{equation}
    \text{gen-err}_\mathcal{P}(\mathcal{A},S) \leq \frac{{\phi^\star}^{-1}(nD_\varphi(\Pm_{SH}\|\Pm_S\Pm_H))}{n}. 
\end{equation}
\end{corollary}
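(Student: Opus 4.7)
The plan is to apply Theorem \ref{generalBoundInverseConjugate} to the centred loss $f(S,H) = L_S(H) - \Pm_H(L_\Pm(H))$ together with $\nu = \Pm_{SH}$ and the functional $\psi^\star_{\Pm_S\Pm_H} = D_\varphi(\cdot\|\Pm_S\Pm_H)$. The constant $\Pm_H(L_\Pm(H))$ equals $\Pm_S\Pm_H(L_S(H))$ by Fubini, so $f$ is centred under the product measure; more importantly, Definition \ref{generr} gives $\Pm_{SH}(f) = \Pm_{SH}(L_S(H)) - \Pm_H(L_\Pm(H)) = \text{gen-err}_{\mathcal{P}}(\A,S)$, which is precisely the quantity we want to bound.

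The only twist is that the hypothesis $\psi_{\Pm_S\Pm_H}(\lambda f) \le \phi(\lambda)/n$ carries an extra $1/n$ factor compared with the hypothesis of Theorem \ref{generalBoundInverseConjugate}. I would therefore invoke the theorem with the rescaled convex function $\tilde\phi(\lambda) := \phi(\lambda)/n$, which is still strongly convex. A one-line computation with the Legendre-Fenchel definition gives
\begin{equation*}
\tilde\phi^\star(t) = \sup_{\lambda>0}\!\left(\lambda t - \tfrac{\phi(\lambda)}{n}\right) = \tfrac{1}{n}\sup_{\lambda>0}\!\bigl(n\lambda t - \phi(\lambda)\bigr) = \tfrac{1}{n}\phi^\star(nt),
\end{equation*}
whence $\tilde\phi^{\star\,-1}(s) = \phi^{\star\,-1}(ns)/n$. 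Plugging this scaling into the conclusion of Theorem \ref{generalBoundInverseConjugate} then yields
\begin{equation*}
\text{gen-err}_{\mathcal{P}}(\A,S) \le \tilde\phi^{\star\,-1}\!\bigl(D_\varphi(\Pm_{SH}\|\Pm_S\Pm_H)\bigr) = \frac{\phi^{\star\,-1}\!\bigl(n D_\varphi(\Pm_{SH}\|\Pm_S\Pm_H)\bigr)}{n},
\end{equation*}
which is exactly the claim.

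I do not expect a real obstacle here; the only point meriting care is the positivity requirement ${\phi'}^{-1}({\phi^{\star}}^{-1}(\psi^\star(\nu))) > 0$ of Theorem \ref{generalBoundInverseConjugate} when passing to $\tilde\phi$. Since rescaling by the positive constant $1/n$ preserves the sign of the derivative and $D_\varphi(\Pm_{SH}\|\Pm_S\Pm_H) \ge 0$, this transfers directly, as already noted in the remark following the theorem. The main conceptual content, rather than any obstacle, is recognising that the $1/n$ factor coming from the ``$n$-sum property'' discussed just before the corollary propagates, via the scaling identity above, into the outer $1/n$ prefactor of the final bound.
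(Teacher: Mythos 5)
Your proposal is correct and takes essentially the same route as the paper: the paper's proof consists precisely of the rescaling computation $\phi_n^\star(\lambda^\star)=\frac1n\phi^\star(n\lambda^\star)$, hence ${\phi_n^\star}^{-1}(t)=\frac1n{\phi^\star}^{-1}(nt)$, followed by an appeal to Theorem \ref{generalBoundInverseConjugate}. Your explicit identification of $\Pm_{SH}(f)$ with $\text{gen-err}_\mathcal{P}(\mathcal{A},S)$ and your check of the positivity condition merely spell out steps the paper leaves implicit.
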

\begin{proof}
Let us denote with $\phi_n(\lambda)=\frac{\phi(\lambda)}{n}$ one has that $\phi_n^\star(\lambda^\star) = \frac1n\phi^\star(\lambda n)$ and consequently ${\phi_n^\star}^{-1}(t)= \frac1n{\phi^\star}^{-1}(nt).$ The statement then follows from Theorem \ref{generalBoundInverseConjugate}.
\end{proof}
From this we can now easily recover all the bounds on the expected generalisation-error that involve KL and that are present in the literature, for instance:
\begin{corollary}\label{boundGeneralErrorMI}
Let $\varphi(x)=x\log x$ in Corollary \ref{generalGenErrBound}, hence $D_\varphi(\cdot\|\Pm_S\Pm_H)$ is the KL-Divergence.  Assume that the loss function $\ell(h,Z_i)$ 
is $\sigma^2$-sub-Gaussian under $\Pm_Z$ for every $h$. We have that
\begin{equation}
    \text{gen-err}_\mathcal{P}(\mathcal{A},S) \leq \sqrt{\frac{2\sigma^2I(S;\A(S))}{n}}. \label{eq:genErrMI}
\end{equation}
\end{corollary}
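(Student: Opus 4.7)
The plan is to specialize Corollary \ref{generalGenErrBound} to the KL divergence and to verify that the sub-Gaussianity hypothesis on the per-sample losses yields precisely the hypothesis on $\psi_{\Pm_S\Pm_H}$ required there. With $\varphi(x)=x\log x$, standard variational duality gives $\psi_\mu(f)=\log\mu(\exp(f))$ for any fixed measure $\mu$, so the condition $\psi_{\Pm_S\Pm_H}(\lambda(L_S(H)-\Pm_H(L_\Pm(H))))\leq \phi(\lambda)/n$ reduces to a logarithmic moment generating function bound on the centered empirical risk. Moreover, since $\Pm_S\Pm_H$ is a product measure, $\log\Pm_S\Pm_H(e^{\lambda g(S,H)})=\log\Pm_H(\Pm_S(e^{\lambda g(S,H)}))$, so it suffices to bound $\log\Pm_S(e^{\lambda(L_S(h)-L_\Pm(h))})$ uniformly in $h\in\mathcal{H}$.

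Next I would exploit independence and the sub-Gaussian assumption. For each fixed $h$, the centered samples $\ell(h,Z_i)-\Pm_Z(\ell(h,Z_i))$ are i.i.d.\ under $\Pm_Z^{\otimes n}$, each $\sigma^2$-sub-Gaussian, so
\begin{equation*}
\log\Pm_S\!\left(\exp\!\left(\lambda\bigl(L_S(h)-L_\Pm(h)\bigr)\right)\right)
= \sum_{i=1}^n \log\Pm_Z\!\left(\exp\!\left(\tfrac{\lambda}{n}(\ell(h,Z_i)-\Pm_Z(\ell(h,Z_i)))\right)\right)
\leq n\cdot \frac{\lambda^2\sigma^2}{2n^2} = \frac{\lambda^2\sigma^2}{2n}.
\end{equation*}
This is exactly $\phi(\lambda)/n$ for $\phi(\lambda)=\lambda^2\sigma^2/2$, which verifies the hypothesis of Corollary \ref{generalGenErrBound} for every $h$ (and hence, after taking $\log\Pm_H$, for $\Pm_S\Pm_H$). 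This is also the ``$n$-sum property'' informally discussed above.

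Finally, I would invert $\phi^\star$. Exactly as in the proof of Corollary \ref{boundDifferenceKL}, the Legendre-Fenchel dual of $\phi(\lambda)=\lambda^2\sigma^2/2$ is $\phi^\star(\lambda^\star)=(\lambda^\star)^2/(2\sigma^2)$, so ${\phi^\star}^{-1}(t)=\sqrt{2\sigma^2 t}$. Substituting into the conclusion of Corollary \ref{generalGenErrBound}, and using $D(\Pm_{SH}\|\Pm_S\Pm_H)=I(S;\A(S))$, yields
\begin{equation*}
\text{gen-err}_\mathcal{P}(\A,S)\leq \frac{\sqrt{2\sigma^2\, n\, I(S;\A(S))}}{n}=\sqrt{\frac{2\sigma^2\,I(S;\A(S))}{n}},
\end{equation*}
as desired. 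No single step is truly difficult; the only subtle point is the passage from per-$h$ sub-Gaussianity under $\Pm_Z$ to the $1/n$-scaled CGF bound under $\Pm_S$ and then to the product measure $\Pm_S\Pm_H$, which is handled by the independence argument above.
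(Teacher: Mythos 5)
Your proposal is correct and follows essentially the same route as the paper: it verifies the hypothesis of Corollary \ref{generalGenErrBound} by tensorizing the per-sample sub-Gaussian bound over the i.i.d.\ samples (which is exactly the paper's informal ``$n$-sum property''), passes to the product measure $\Pm_S\Pm_H$, and then inverts $\phi^\star$ via $\tfrac1n{\phi^\star}^{-1}(nt)=\sqrt{2\sigma^2 t/n}$. The only difference is that you spell out explicitly the CGF factorization that the paper leaves implicit.
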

\begin{proof}
Since $\ell(h,Z_i)$ is $\sigma^2$-sub-Gaussian under $\Pm_Z$, then $\psi_{\Pm_S}\left(\frac1n\sum_i\lambda(\ell(h,Z_i)-\Pm_Z(\ell(h,Z)))\right)\leq \frac{\phi(\lambda)}{n}.$
 This implies that $\psi_{\Pm_Z},\psi_{\Pm_S}$ and $\phi$ satisfy the ``$n$-sum property''. The argument then follows directly from Corollary \ref{generalGenErrBound} along with the fact that $\frac{1}{n}{\phi^\star}^{-1}(nt)=
 \sqrt{\frac{2\sigma^2t}{n}}.$ 
\end{proof}
\begin{remark}
Both Corollary \ref{boundDifferenceKL} and \ref{boundGeneralErrorMI} have appeared in the literature in a variety of forms \cite[Prop. 1]{explBiasMI}, \cite[Lemma 1]{infoThGenAn}. 
\end{remark}
 To conclude the section and emphasise the generality of this approach, let us choose a different divergence. 
\begin{corollary}\label{genErrBoundChi}
Let $\varphi(x)=\frac{x^2}{2}$ in Corollary \ref{generalGenErrBound}, hence $\psi^\star_{\Pm_S\Pm_H}(\cdot)=D_\varphi(\cdot\|\Pm_S\Pm_H) = (\chi^2(\cdot\|\Pm_S\Pm_H)+1)/2$.
Assume that given the loss function $\ell(h,Z_i)$ there exists a constant $K>0$ such that for every $h$ $\Pm_Z((\lambda(\ell(h,Z_i)-\Pm_Z(\ell(h,Z)))^2)\leq K\lambda^2$ We have that
\begin{equation}
    \text{gen-err}_\mathcal{P}(\mathcal{A},S) \leq \sqrt{\frac{2K(\chi^2(\Pm_{SH}\|\Pm_S\Pm_H)+1)}{n}}.
\end{equation}
\end{corollary}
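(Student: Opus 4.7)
The plan is to specialize Corollary \ref{generalGenErrBound} to $\varphi(x) = x^2/2$, so that $\psi^\star_{\Pm_S\Pm_H}(\cdot) = D_\varphi(\cdot\|\Pm_S\Pm_H) = (\chi^2(\cdot\|\Pm_S\Pm_H)+1)/2$ as stated. My main task is then to identify the right $\phi$ controlling $\psi_{\Pm_S\Pm_H}$ and to invert its Legendre--Fenchel dual.

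First I would compute the Legendre--Fenchel dual of $\varphi$, which is again $\varphi^\star(y) = y^2/2$, and apply the identity $\psi_\mu(f) = \mu(\varphi^\star(f))$ for $\varphi$-divergences (recalled in the paragraph preceding Corollary \ref{generalGenErrBound}) to obtain $\psi_{\Pm_Z}(f) = \tfrac{1}{2}\Pm_Z(f^2)$. Substituting $f = \lambda(\ell(h,Z) - \Pm_Z(\ell(h,Z)))$, the assumed second-moment bound immediately gives $\psi_{\Pm_Z}(f) \leq K\lambda^2/2$, so I would take $\phi(\lambda) = K\lambda^2/2$, which is strongly convex. The required $n$-sample bound then follows by expanding the square of the average of centered summands: by independence of the $Z_i$ under $\Pm_S = \Pm_Z^{\otimes n}$ and the fact that each summand has mean zero, the cross terms vanish and the variance of the average is at most $K\lambda^2/n$, yielding $\psi_{\Pm_S}(\tfrac{\lambda}{n}\sum_i(\ell(h,Z_i)-\Pm_Z(\ell(h,Z)))) \leq \phi(\lambda)/n$. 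The product structure of $\Pm_S\Pm_H$ then transfers this to $\psi_{\Pm_S\Pm_H}$, as explained in the paragraph preceding Corollary \ref{generalGenErrBound}, verifying the ``$n$-sum property'' in the form required there.

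With $\phi$ in hand, I would compute $\phi^\star(\mu) = \mu^2/(2K)$ and $(\phi^\star)^{-1}(t) = \sqrt{2Kt}$, and substitute these into the conclusion of Corollary \ref{generalGenErrBound} together with the identity $D_\varphi(\Pm_{SH}\|\Pm_S\Pm_H) = (\chi^2(\Pm_{SH}\|\Pm_S\Pm_H)+1)/2$ to arrive at the claimed bound. I do not anticipate a serious obstacle: this is an essentially mechanical instantiation of the general machinery. The only non-routine point is that, unlike in the KL case of Corollary \ref{boundGeneralErrorMI} where the $n$-sum property follows from the multiplicativity of exponential moment generating functions across independent coordinates, here it instead follows from the vanishing of cross-covariances under independence, which explains why the hypothesis is phrased as a variance bound on the centered loss rather than as a sub-Gaussian bound.
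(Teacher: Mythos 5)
Your proposal is correct and follows essentially the same route as the paper's own proof: identify $\psi_{\mu}(f)=\tfrac12\mu(f^2)$ from $\varphi(x)=x^2/2$, take $\phi(\lambda)=K\lambda^2/2$, verify the $n$-sum property via independence of the $Z_i$ (you spell out the vanishing of cross terms, which the paper only asserts), and invert $\phi^\star(t)=t^2/(2K)$ to get $\sqrt{2Kt}$ inside Corollary \ref{generalGenErrBound}. No gaps.
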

\begin{proof}
Given that $\psi^\star_{\Pm_S\Pm_H}(\nu) = \frac12 (\chi^2(\nu\|\Pm_S\Pm_H)+1)$ one has that $\psi_{\Pm_S\Pm_H}(f)=\frac12\Pm_S\Pm_H(f^2)$. Since the $Z_i$ are assumed to be iid random variables the variables $\ell(h,Z_i)$ are also iid and by assumption one has that  $\Pm_S((\lambda\frac{1}{n}\sum_i(\ell(h,Z_i)-\Pm_Z(\ell(h,Z)))^2)\leq \frac{\lambda^2}{n}K$, hence $\psi_{\Pm_Z},\psi_{\Pm_S}$ and $\phi$ satisfy the ``$n$-sum property''. The argument then follows from Corollary \ref{generalGenErrBound} and by noticing that $\phi^\star(\lambda^\star)=\frac{{\lambda^\star}^2}{2K}$ which in turn implies that ${\phi^\star}^{-1}(t) = \sqrt{2Kt}.$
\end{proof}
\begin{remark}\label{boundedMomentVSboundedCGF}
Assuming the $\sigma^2$-sub-Gaussianity of $\ell(h,Z_i)-\Pm_Z(\ell(h,Z))$ naturally implies a bound on $\Pm_Z((\lambda(\ell(h,Z_i)-\Pm_Z(\ell(h,Z)))^\kappa)$ for every $\kappa\geq 1$. For instance, $\sigma^2$-sub-Gaussianity implies that in Corollary \ref{genErrBoundChi},  $\Pm_Z((\lambda(\ell(h,Z_i)-\Pm_Z(\ell(h,Z)))^2)\leq K\lambda^2$ holds with $K=2\sigma^2 e^{2/e}$. However, assuming that $\ell$ has bounded variance (for every $h$ or on expectation wrt $\Pm_H$) is much less restrictive than assuming it is sub-Gaussian. As an example of this, suppose that the loss $\ell$ is $\sigma^2$-sub-Gaussian. One has that $\ell^2$ is sub-Exponential with parameters $(4\sqrt{2}\sigma^2,4\sigma^2)$. This means that $\ell^2$ has a finite log-moment generating function only for $\lambda<1/(4\sigma^2).$ Considering the framework given by Theorem \ref{generalBoundInverseConjugate} and Corollary \ref{boundGeneralErrorMI}, in order to solve the infimum in \eqref{toMinimiseExpBound} (which, in turn, provides Eq \eqref{eq:genErrMI}), one needs to select $\lambda^\star = \sqrt{\frac{2I(S;H)}{\sigma^2}}.$ Consequently, if $I(S;H)\geq \frac{1}{2^5\sigma^2}$ the bound is not valid as, for that choice of $\lambda$, the log-moment generating function of $\ell$ with respect to $\Pm_S\Pm_H$ is actually unbounded (and, thus, cannot be bounded by $\phi(\lambda)$). On the other hand, a sub-Exponential random variable is such that all of its moments are bounded,  hence, an approach as the one suggested by Corollary \ref{genErrBoundChi} can be successful.
\end{remark}
\subsection{Recovering other known results}
Other known results can be recovered, for instance noticing that $\text{gen-err}_\mathcal{P}(\mathcal{A},S)$ can be re-written as  $\frac1n \sum_{i=1}^n \left(\Pm_{HZ}(\ell(H,Z_i))-\Pm_H\Pm_Z(\ell(H,Z))\right)$ and using Corollary \ref{boundDifferenceKL} on each term inside the summation gives us the main result in \cite[Thm. 2]{ismi}.

Alternatively, the observation that one can consider a super-sample $\tilde{S}$ of length $2n$ and take two random subsets of length $n$ $S,S'$ independently, allows us to rewrite the generalisation error as follows $\text{gen-err}_\mathcal{P}(\mathcal{A},\tilde{S}) = \Pm_{\tilde{S}}(\Pm_{SH|\tilde{S}}(L_{\tilde{S}}(H)-L_\Pm(H)))$. One can then use Theorem \ref{generalBoundInverseConjugate} for every given choice of $\tilde{S}$ on $\Pm_{SH|\tilde{S}=\tilde{s}}(L_{\tilde{S}}(H)-L_\Pm(H))$ and recover \cite[Thm 5.1, Corollary 5.2, 5.3]{conditionalMI}. It represents a bound involving $D(\Pm_{SH|\tilde{S}=\tilde{s}}\|\Pm_{S|\tilde{S}=\tilde{s}}\Pm_{H|\tilde{S}=\tilde{s}})$ after assuming/showing that $\psi_{\Pm_{S|\tilde{S}=\tilde{s}}\Pm_{H|\tilde{S}=\tilde{s}}}(\lambda(L_{\tilde{s}}(H)-L_\Pm(H))) \leq \frac{\lambda^2}{2n}c(\tilde{s})$ with $c(\tilde{s})$ a constant that depends on the realisation of $\tilde{S}$. Clearly the possible combinations are numerous, however the pattern remains the same: a bound on the dual of the targeted divergence (or functional of measure) implies a bound on a difference of expectations. Indeed, still following the approach undertaken in \cite{ismi} but with the general spirit that characterises this work, one can show the following:
$
    \text{gen-err}_\mathcal{P}(\mathcal{A},S) \leq \frac1n \sum_{i=1}^n \sqrt{2K(\chi^2(\Pm_{Z_iH}\|\Pm_{Z_i}\Pm_H)+1)}.
$
Otherwise, following \cite{conditionalMI} one can recover bounds involving $\Pm_{\tilde{S}}\left(\chi^2\left(\Pm_{SH|\tilde{S}}\|\Pm_{S|\tilde{S}}\Pm_{H|\tilde{S}}\right)\right)$.
\section{Transportation-Cost inequalities}\label{phipIneq}
A recurring theme in the field that lies at the intersection between Information Theory and Optimal Transport is showing that Transportation-Cost Inequalities are equivalent to some form of concentration of measure. 
One such example is Theorem$\,\diamond$ in Section \ref{tcI}.
An in-depth review of this connection can be found in \cite[Section 3.4]{concentrationMeasureII}.
This type of results are generally made of two parts: the ``if part'' (if $\mu$ satisfies concentration for every function in some family then the $T_p(c)$ holds) and the ``only if'' part (if $\mu$ satisfies a $T_p(c)$ inequality then one has concentration for every function in some family). So far we have essentially established the ``if part'' for general functionals of measures. Let us now establish the ``only if'' part and then discuss how it relates to the classical framework. 
\begin{theorem}\label{onlyIfGeneralBoundInverseConjugate}
Let  
$\psi:\M(\X)\to\mathbb{R}$ be a 
functional and 
let $f\in\M(\X)^\star$. If there exist a Young function $\phi$ and its complementary function  $\phi^\star$
 which admits a generalised inverse ${\phi^\star}^{-1}$ such that for every $\nu$ with $\nu(f)<+\infty$ one has
\begin{equation}
    \nu(f) \leq {\phi^\star}^{-1}\left(\psi(\nu)\right), 
    \label{eq:onlyIfGeneralBoundInverseConjugateAssumpt}
\end{equation}
then 
\begin{equation}
    \psi^\star(\lambda f)\leq \phi(\lambda)\text{ for every }\lambda>0, \label{eq:onlyIfGeneralBoundInverseConjugateThesis}
\end{equation}
where $\psi^\star:\mathcal{M}(\X)^\star \to\mathbb{R}$ is the Legendre-Fenchel dual of $\psi$.
\end{theorem}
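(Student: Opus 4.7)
The plan is to mirror the argument of Theorem \ref{generalBoundInverseConjugate}, reading the same Young-type manipulation in the opposite direction. There, a bound on the dual functional $\psi$ was converted into a bound on $\nu(f)$ via Legendre-Fenchel duality; here, the bound on $\nu(f)$ is fed back into the variational representation of $\psi^\star$ to recover a bound on $\psi^\star(\lambda f)$.

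First, I would unfold the definition of the Legendre-Fenchel dual,
\begin{equation*}
\psi^\star(\lambda f) = \sup_{\nu \in \mathcal{M}(\X)} \bigl[\lambda\, \nu(f) - \psi(\nu)\bigr],
\end{equation*}
and restrict the supremum to measures $\nu$ for which $\nu(f)$ is finite (outside this set the hypothesis is vacuous, but in the settings of interest, e.g.\ $\psi$ a $\varphi$-divergence with $\nu\ll\mu$ and $f$ suitably integrable, this restriction is automatic). Applying the hypothesis \eqref{eq:onlyIfGeneralBoundInverseConjugateAssumpt} and using $\lambda>0$ to preserve the direction of the inequality, I get $\lambda\, \nu(f) \leq \lambda \cdot {\phi^\star}^{-1}(\psi(\nu))$.

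Next, I would invoke Young's inequality for the complementary pair $(\phi,\phi^\star)$: for all $a,b>0$, $ab \leq \phi(a) + \phi^\star(b)$. Setting $a=\lambda$ and $b={\phi^\star}^{-1}(\psi(\nu))$, and using that $\phi^\star\circ{\phi^\star}^{-1}$ acts as the identity on the range of $\phi^\star$ (which holds since $\phi^\star$, as a Young function, is continuous on its finite domain), I would obtain
\begin{equation*}
\lambda \cdot {\phi^\star}^{-1}(\psi(\nu)) \leq \phi(\lambda) + \phi^\star\bigl({\phi^\star}^{-1}(\psi(\nu))\bigr) = \phi(\lambda) + \psi(\nu).
\end{equation*}
Rearranging yields $\lambda\,\nu(f) - \psi(\nu) \leq \phi(\lambda)$ for every admissible $\nu$, and taking the supremum over $\nu$ produces exactly the claim \eqref{eq:onlyIfGeneralBoundInverseConjugateThesis}.

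The main obstacle I anticipate is the bookkeeping around the generalised inverse and the domain of the supremum. One must treat the case where $\psi(\nu)$ lies outside the range of $\phi^\star$ (in which case ${\phi^\star}^{-1}$ may be $+\infty$ and the corresponding Young step becomes trivial), and the case where $\nu(f)=+\infty$ might be approached along a maximising sequence. Both are typically ruled out by mild integrability/absolute-continuity conditions linking $\psi$ and $f$ — conditions that already hold implicitly in all the $\varphi$-divergence instances considered earlier in the paper — so the proof reduces to the clean three-line computation above.
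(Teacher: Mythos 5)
Your proposal is correct and follows essentially the same route as the paper: both proofs turn the hypothesis $\nu(f)\leq{\phi^\star}^{-1}(\psi(\nu))$ into $\lambda\nu(f)-\psi(\nu)\leq\phi(\lambda)$ for every $\lambda>0$ and then take the supremum over $\nu$ in the Legendre--Fenchel representation of $\psi^\star(\lambda f)$. The only cosmetic difference is that the paper cites \cite[Lemma 2.4]{BLM2013Concentration} for the identity ${\phi^\star}^{-1}(t)=\inf_{\lambda>0}(\phi(\lambda)+t)/\lambda$, whereas you re-derive the needed half of it directly from Young's inequality together with $\phi^\star\circ{\phi^\star}^{-1}=\mathrm{id}$; the bookkeeping caveats you flag are real but are glossed over in the paper as well.
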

\begin{proof}
Given that $\phi$ is a Young function, one has that by \cite[Lemma 2.4]{BLM2013Concentration}
\begin{align}
    \nu(f) &\leq {\phi^\star}^{-1}\left(\psi(\nu)\right) =\inf_{\lambda>0} \frac{\phi(\lambda) + \psi(\nu)}{\lambda}.
\end{align}
This means that for every $\lambda>0$ and every $\nu$-integrable function $f$,
\begin{equation}
    \phi(\lambda) \geq \lambda \nu(f) - \psi(\nu). \label{step1}
\end{equation}
Taking the supremum with respect to $\nu\in\M(\X)$ one recovers the statement:
$
 \phi(\lambda) \geq \psi^\star(\lambda f). 
$
\end{proof}
In order to see how Theorem \ref{generalBoundInverseConjugate} and Theorem \ref{onlyIfGeneralBoundInverseConjugate} represent, respectively, the ``if'' and ``only if'' part connecting $T_p(c)-$like inequalities to concentration let us recover Theorem$\,\diamond$. 
In particular, select
\begin{itemize} 
\item $\psi^\star(\nu) = D(\nu\|\mu)$ for a given $\mu$ (consequently, one has that $\psi(\lambda f) =\log\mu(\exp(\lambda f))$, 
cf.~\cite[Lemma 6.2.13]{dembo2009large});
\item $\phi(\lambda)= \frac{c\lambda^2}{2}$ (which implies ${\phi^\star}^{-1}(\kappa)=\sqrt{2c\kappa}$);
\end{itemize}
Theorem~\ref{generalBoundInverseConjugate} is what allows us to reach \eqref{tpcThesis} starting from \eqref{tpcHypothesis}.
Keeping the same $\phi$ but inverting the roles of $\psi$ and $\psi^\star$ in Theorem~\ref{onlyIfGeneralBoundInverseConjugate} is what allows us to reach \eqref{tpcHypothesis} starting from \eqref{tpcThesis}. 
 Some extra technical steps are necessary in order to bring in Wasserstein Distances (which will be considered in the proof just below). Given the generality of the results we can, as an example, consider a setting similar to Theorem$\,\diamond$ but involving a different divergence:
\begin{theorem}\label{TpcHellinger}
Let $\mu \in \Pm_1(\X)$ and $\beta>1$. There exists a $c$ such that for every $\lambda$ and every $1$-Lipschitz function $f$ \begin{equation}\mu\left(|\lambda f|^\beta\right)\leq (c\lambda )^{\beta}\label{eq:TpcHellingerAssumption},\end{equation} if and only if, 
for every $\nu\ll\mu$

\begin{align}
    W_1(\mu,\nu)&\leq \left(\alpha c^\alpha H_\alpha(\nu\|\mu)\right)^\frac{1}{\alpha}\label{boundDiffHellinger}, 
\end{align}
where 
$\alpha= \frac{\beta}{\beta-1}.$
Setting $\beta=2$ we recover the following:
\begin{equation}
    W_1(\mu,\nu)\leq \sqrt{c^22(\chi^2(\nu\|\mu)+1)} \label{TpcChiSq}.
\end{equation}
\end{theorem}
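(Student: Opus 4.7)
The plan is to deduce Theorem~\ref{TpcHellinger} directly from Theorems~\ref{generalBoundInverseConjugate} and~\ref{onlyIfGeneralBoundInverseConjugate} with a single, common choice of functional pair, using Kantorovich--Rubinstein as the bridge between test functions and Wasserstein distance. Take $\varphi(x) = |x|^\alpha/\alpha$ so that the divergence side is $D_\varphi(\nu\|\mu) = H_\alpha(\nu\|\mu)$; its Fenchel dual on functions is $f \mapsto \mu(|f|^\beta)/\beta$ via the Young pair $1/\alpha + 1/\beta = 1$. On the convex side take $\phi(\lambda) = (c\lambda)^\beta/\beta$; a direct computation yields $\phi^\star(t) = t^\alpha/(\alpha c^\alpha)$ and ${\phi^\star}^{-1}(u) = (\alpha c^\alpha u)^{1/\alpha}$, which is precisely the right-hand bracket in \eqref{boundDiffHellinger}. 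With these choices, the moment hypothesis \eqref{eq:TpcHellingerAssumption} becomes exactly the Fenchel-dual upper bound ``$\psi(\lambda f)\leq \phi(\lambda)$'' required by both theorems.

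For the ``if'' direction, fix a 1-Lipschitz $f$ and center it to $\tilde f = f - \mu(f)$, which is still 1-Lipschitz; the hypothesis \eqref{eq:TpcHellingerAssumption} transfers verbatim to $\tilde f$. Theorem~\ref{generalBoundInverseConjugate} then delivers $\nu(\tilde f) \leq (\alpha c^\alpha H_\alpha(\nu\|\mu))^{1/\alpha}$. Running the same argument with $-\tilde f$ controls the other sign, so $|\nu(f) - \mu(f)| \leq (\alpha c^\alpha H_\alpha(\nu\|\mu))^{1/\alpha}$. Taking the supremum over 1-Lipschitz $f$ and invoking Kantorovich--Rubinstein, $W_1(\mu,\nu) = \sup_{\|f\|_{\mathrm{Lip}} \leq 1}|\mu(f) - \nu(f)|$, yields \eqref{boundDiffHellinger}.

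For the ``only if'' direction, start from \eqref{boundDiffHellinger} and let $f$ be any 1-Lipschitz function with $\mu(f) = 0$. Kantorovich--Rubinstein gives $\nu(f) \leq W_1(\mu,\nu) \leq (\alpha c^\alpha H_\alpha(\nu\|\mu))^{1/\alpha} = {\phi^\star}^{-1}(H_\alpha(\nu\|\mu))$, which is exactly the assumption \eqref{eq:onlyIfGeneralBoundInverseConjugateAssumpt} of Theorem~\ref{onlyIfGeneralBoundInverseConjugate} (with $\phi$ a genuine Young function since $\beta > 1$). The conclusion \eqref{eq:onlyIfGeneralBoundInverseConjugateThesis} then reads $\mu(|\lambda f|^\beta)/\beta \leq (c\lambda)^\beta/\beta$, i.e.\ \eqref{eq:TpcHellingerAssumption}. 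The $\beta = 2$ corollary follows by specialisation, using $H_2(\nu\|\mu) = (\chi^2(\nu\|\mu) + 1)/2$.

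The main delicate point is the clean identification of the Fenchel dual of $H_\alpha(\cdot\|\mu)$ as $f \mapsto \mu(|f|^\beta)/\beta$ on the appropriate cone of test functions (cf.\ the $\varphi$-divergence duality discussed around~\cite{minimizationMeasures}), so that the dualities in Theorems~\ref{generalBoundInverseConjugate} and~\ref{onlyIfGeneralBoundInverseConjugate} line up correctly; once that calibration and the centering/$\pm f$ symmetrisation are in place, both directions are direct specialisations and the Wasserstein distance enters purely through Kantorovich--Rubinstein.
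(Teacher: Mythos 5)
Your proposal is correct and follows essentially the same route as the paper's own proof: the same identification $\psi^\star(\nu)=H_\alpha(\nu\|\mu)$ with dual $\psi(f)=\mu(|f|^\beta)/\beta$, the same Young pair $\phi(\lambda)=(c\lambda)^\beta/\beta$ with ${\phi^\star}^{-1}(\kappa)=(\alpha c^\alpha\kappa)^{1/\alpha}$, the same centering $f\mapsto f-\mu(f)$ and $\pm f$ symmetrisation, and Kantorovich--Rubinstein as the bridge to $W_1$, with Theorems~\ref{generalBoundInverseConjugate} and~\ref{onlyIfGeneralBoundInverseConjugate} supplying the two directions. No substantive differences to report.
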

\begin{proof}
Let $\mu$ be a probability measure and let $\psi^\star(\nu)=H_\alpha(\nu\|\mu)$ with $\alpha>1$, one has that $\psi(f)=\frac{\mu(|f|^\beta)}{\beta}$ (cf.~\cite[Theorem 3.3]{minimizationMeasures}). 
Moreover, let $\phi(\lambda) = \frac{|c\lambda|^\beta}{\beta}$, one has that $\phi_Y^\star(\lambda^\star)=\frac{|\lambda^\star/c|^\alpha}{\alpha}$ with $\alpha=\frac{\beta}{\beta-1}$. Consequently, for positive $\kappa$, ${\phi^\star}^{-1}(\kappa)= (\alpha c^\alpha \kappa )^\frac1\alpha.$ \newline Let $f$ be a function such that $\left\lVert f\right\rVert_{Lip}\leq 1$ and $\mu(f)=0$. Assume that for every $\nu\ll\mu$, $W_1(\mu,\nu)\leq(\alpha c^\alpha  $ $H_\alpha(\nu\|\mu))^\frac{1}{\alpha}$. By the Kantorovich-Rubenstein dual representation of $W_1$ one has that
\begin{equation}
    W_1(\mu,\nu) = \sup_{f:\left\lVert f\right\rVert_{Lip}\leq 1} \left|\mu(f)-\nu(f)\right|\label{KantorovichRubenstein}.
\end{equation}
Hence, for every function $f$ such that $\left\lVert f\right\rVert_{Lip}\leq 1$, $\mu(f)=0$ one can rewrite Equation~\ref{TpcChiSq} as follows:
\begin{equation}
    \nu(f) \leq (\alpha c^\alpha H_\alpha(\nu\|\mu))^\frac{1}{\alpha} = {\phi^\star}^{-1}(H_\alpha(\nu\|\mu)). \label{hypothesis}
\end{equation}
Consequently, by Theorem~\ref{onlyIfGeneralBoundInverseConjugate} one has that for every such $f$
\begin{equation}
    \psi^\star(\lambda f) = \mu\left(\frac{|\lambda f|^\beta}{\beta}\right) \leq \frac{(c\lambda )^\beta}{\beta} = \phi(\lambda) . \label{thesis}
\end{equation}
Similarly, assuming \eqref{thesis} leads to \eqref{hypothesis} for every such $f$ via Theorem~\ref{generalBoundInverseConjugate}. Repeating the same argument with -$f$ one reaches the following statement:
\begin{equation}
    |\mu(f)-\nu(f)|\leq (\alpha c^\alpha H_\alpha(\nu\|\mu))^\frac{1}{\alpha}. \label{toWasserstein}
\end{equation}
The assumption that $\mu(f)=0$, can be dropped replacing $f$ with $f-\mu(f)$. Taking then the supremum over all the $1-$Lipschitz functions $f$ in \eqref{toWasserstein} and using again the Kantorovich-Rubenstein duality formula for Wasserstein Distances one reaches that for every $\nu\ll\mu$, $$W_1(\mu,\nu) \leq (\alpha c^\alpha H_\alpha(\nu\|\mu))^\frac{1}{\alpha}.$$
\end{proof}
Drawing inspiration from Theorem~\ref{TpcHellinger} one can consider almost any $\varphi-$Divergence. Some restrictions on the possible choices of $\varphi$ arise naturally in order to have access to both variational representations~(cf.~\cite[Theorems 3.3, 3.4]{minimizationMeasures}).
\begin{remark}
In the classical $T_1(c)$-setting one assumes that $\log(\mu(\exp(\lambda f)))\leq \frac{\lambda^2c^2}{2}$ for every $f$ that is $1-$Lipschitz and consequently recovers \eqref{tpcThesis}. 
Considering Theorem \ref{TpcHellinger} instead, in Eq. \eqref{eq:TpcHellingerAssumption} we are only asking for the $\beta$-th moment to be bounded. The same argument as in Remark \ref{boundedMomentVSboundedCGF} holds: even though it is well known that $\chi^2(\nu\|\mu)\geq D(\nu\|\mu)$ (leading thus, to a worse bound on $W_1$), in order to get \eqref{TpcChiSq} we only need to bound the second moment of $f$ with respect to $\mu$ and such a bound can exist for functions with unbounded log-moment generating function, which are excluded from a classical $T_1(c)$ setting (cf. Remark \ref{boundedMomentVSboundedCGF}). \end{remark} 
Theorem \ref{TpcHellinger} highlights the following approach: 
starting from the variational representation of Wasserestein distances $W_p$ one understands the restriction on the family of functions that needs to be considered (cf. Equation \eqref{KantorovichRubenstein}). Let us denote such a family with $\mathcal{F}_p$. The next step is then to fix a measure $\mu$ and a functional $\psi^\star_\mu$ (in Theorem \ref{TpcHellinger}, the choice of $\psi^\star_\mu$ has fallen on the Hellinger integral). The
upper-bound that one can provide on $\psi(\lambda f)$ for $f\in\mathcal{F}_p$, characterised by $\phi(\lambda)$ (cf. Equations~\eqref{tpcHypothesis},~\eqref{eq:generalBoundInverseConjugateAssumption} and~\eqref{eq:TpcHellingerAssumption}), will determine the shape of the $T_p(c)$-like inequality (through ${\phi^\star}^{-1}$, cf. Equations  \eqref{tpcThesis}, \eqref{eq:generalBoundInverseConjugateThesis} and \eqref{boundDiffHellinger}), here denoted $\phi_p$-inequalities for convenience. Vice versa, bounding a Wasserstein distance $W_p$ through a divergence $\psi^\star_\mu(\nu)$ via a  $\phi_p$-inequality (cf. Equations  \eqref{tpcThesis}, \eqref{eq:onlyIfGeneralBoundInverseConjugateAssumpt} and \eqref{boundDiffHellinger}) implies a bound on the dual $\psi_\mu$ 
(cf. Equations \eqref{tpcHypothesis},  \eqref{eq:onlyIfGeneralBoundInverseConjugateThesis} and \eqref{eq:TpcHellingerAssumption}) and that can imply concentration according to $\psi_\mu$, $\phi$ and $\mathcal{F}_p$. 
\section{Conclusions} In this work we linked transportation-cost inequalities with generalisation error bounds. The thread connecting the two approaches is Legendre-Fenchel duality. As a result, we managed to generalise both approaches to various divergences and to random variables that are not necessarily sub-Gaussian. 
The approach undertaken in this work grants us two extra degrees of freedom: $\psi^\star$ and $\phi$. The functional $\psi^\star$ is not necessarily tied to be the KL-Divergence and the function $\phi$ can be any convex function that allows us to upper-bound $\psi(\lambda f)$. In the case of KL and its dual this is tantamount to showing concentration properties of $\mu$. Assuming a Gaussian-like behaviour of the log-moment generating function (hence, choosing $\phi(\lambda)= c\lambda^2/2$) leads to the familiar $\sqrt{2cD(\nu\|\mu)}$.
On the other hand, trying to show a classical $T_p(c)$-like inequality for some measure $\mu$ while fixing the shape of the inequality to be approximately $W_p(\mu,\nu)\leq  \sqrt{cD(\nu\|\mu)}$ (which means, essentially, fixing $\phi$) can be impossible. This depends on the concentration properties of $\mu$ or the family of functions that we have to consider $\mathcal{F}_p$. One can thus relax the inequality by either assuming a behaviour of the cumulant generating function that is different from Gaussian-like (same $\psi^\star$ but different $\phi$) or by picking a completely different divergence (changing $\psi^\star$ and/or $\phi$). 
\section*{Acknowledgment} The authors would like to thank Professor Ugo Vaccaro for insightful comments and suggestions on an early draft of this work. The work in this paper was supported in part by the Swiss National Science Foundation under Grant 200364.
\bibliographystyle{IEEEtran}
\bibliography{sample}
\end{document}